\newcommand{\reals}{\mathbb{R}}
\newcommand{\exponent}{\operatorname{e}}
\newtheorem{stel}{Theorem}
\newtheorem{gevolg}{Corollary}
\newtheorem{lemma}{Lemma}
\theoremstyle{remark}
\newtheorem{opm}{Remark}
\begin{document}

\title{Tragedy of the Commons in the Chemostat}

\author{Martin Schuster\footnote{Department of Microbiology, Oregon State University, Supported by NSF-MCB-1158553, Martin.Schuster@oregonstate.edu}, Eric Foxall\footnote{School of Mathematical \& Statistical Sciences, Arizona State University, Eric.Foxall@asu.edu}, David Finch\footnote{Department of Mathematics, Oregon State University, finch@math.oregonstate.edu}, Hal Smith\footnote{School of Mathematical \& Statistical Sciences, Arizona State University, Supported by Simons Foundation Grant 355819, halsmith@asu.edu}, and Patrick De Leenheer\footnote{Department of Mathematics and Department of Integrative Biology, Oregon State University, Supported in part by NSF-DMS-1411853, deleenhp@math.oregonstate.edu}}

\maketitle

\begin{abstract}
We present a proof of principle for the phenomenon of the tragedy of the commons
that is at the center of many theories on the evolution of cooperation. We establish the tragedy in the context of a general chemostat model with two species, the cooperator and the cheater. Both species have the same growth rate function and yield constant, but the cooperator allocates a portion of the nutrient uptake towards the production of a public good -the ``Commons" in the Tragedy- which is needed to digest the externally supplied nutrient. The cheater on the other hand does not produce this enzyme, and allocates all nutrient uptake towards its own growth. We prove that when the cheater is present initially, both the cooperator and the cheater will eventually go extinct, hereby confirming the occurrence of the tragedy. We also show that without the cheater, the cooperator can survive indefinitely, provided that at least a low level of public good or processed nutrient is available initially. 
Our results provide a predictive framework for the analysis of cooperator-cheater dynamics in a powerful model system of experimental evolution.
\end{abstract}

\section{Introduction}
Cooperative behaviors abound across all domains of life, from animals to microbes \cite{frank,west1}.  Yet, their evolution and maintenance is difficult to explain \cite{lehman,west2,west3}. Why would an individual carry out a costly behavior for the benefit of the group?  Cheaters that reap the benefits of cooperation without paying the costs would gain a competitive advantage and invade the population. This conflict of interest between the individual and the group is also known as the Òtragedy of the commonsÓ described by Hardin \cite{hardin}.  To illustrate the tragedy, Hardin considers a scenario first sketched by Lloyd more than 100 years earlier \cite{lloyd}, a pasture that is shared by herdsmen. It is in each herdsman's best interest to add additional cattle to the pasture, because he gains the profits from individual cattle sales, but shares the costs of overgrazing with all other herdsmen. This behavior is pursued until, ultimately, the commons is destroyed to the detriment of all.

The problem of cooperation has received considerable attention in the microbial realm \cite{west1,asfahl,foster}. Many microbes perform cooperative behaviors such as biofilm formation, virulence, and collective nutrient acquisition.  Often, these behaviors are accomplished by secreted products referred to as public goods \cite{west1,west3}.  Public goods are costly to produce for the individual but provide a collective benefit to the local group.  They include extracellular enzymes that degrade complex food sources, siderophores that scavenge iron from the environment, and secreted toxins and antibiotics that harm other cells.  It has been shown in several microbial systems that public goods can be shared within a population of cells, benefitting cells other than the focal producer \cite{diggle,griffin,rainey,sandoz,greig}. For example, when the bacterium {\it Pseudomonas aeruginosa} is grown on a proteinaceous substrate, mutants deficient in protease secretion enrich in co-culture with the wild-type parent \cite{diggle,sandoz}.  These non-producing strains are termed obligate cheaters: They cannot grow by themselves, but they have a relative growth advantage in mixed cultures with cooperators.  Because cheater enrichment inevitably imposes a burden on the population, the expected outcome is a collapse of the population \cite{rankin}.  This outcome been shown experimentally in a few cases
\cite{rainey,fiegna,dandekar}. Often, however, cooperative behaviors are stably maintained and hence, the focus has largely been on mechanisms that avoid a tragedy of the commons \cite{asfahl2,chuang,fiegna2,foster2,gore,kummerli,waite,wang,xavier}.

To our knowledge, the notion that obligate cheating behavior constitutes a tragedy of the commons and leads to population collapse has not been mathematically proven.  Here, we consider the dynamics between cooperators and obligate cheaters in a continuous culture system.  Continuous cultures or chemostats enable microbial culturing at a specified density and growth rate through the constant dilution of the culture with fresh growth medium \cite{chemostat}.  There is an extensive mathematical theory that describes population dynamics in the chemostat \cite{chemostat}.  We prove that obligate cheaters inevitably increase in frequency until cooperation via public goods is no longer sustainable, eventually leading to wash-out and population collapse. We also show that the dynamics of the cooperators in the absence of cheaters exhibits bistability: Depending on the initial condition of the system, cooperators will either eventually persist, or go extinct. Numerical simulations show that it is possible that the cooperators persist when initially there is no processed nutrient, and only a small
level of enzyme. In summary, populations solely comprised of cooperators have a chance to persist, but they are doomed whenever cheaters arise, even at low initial frequency.



\section{The model and the tragedy}
We propose a chemostat model where
$S$ denotes the concentration of the unprocessed nutrient, $P$ of the processed nutrient, $E$ of the enzyme and $X_1$ is the
concentration of the cooperator who produces an enzyme required for nutrient processing, and $X_2$ of the cheater who does not produce the enzyme. The mass-balance equations for these variables are as follows:
\begin{eqnarray}
\frac{dS}{dt}(t)&=&D(t)(S^0(t)-S)-G(E,S)\label{s1}\\
\frac{dP}{dt}(t)&=&G(E,S)-\frac{1}{\gamma}\left(X_1+X_2 \right)F(P)-D(t)P\label{s2}\\
\frac{dE}{dt}(t)&=&(1-q)X_1F(P)-D(t)E\label{s3}\\
\frac{dX_1}{dt}(t)&=&X_1\left(qF(P)-D(t) \right)\label{s4}\\
\frac{dX_2}{dt}(t)&=&X_2\left(F(P)-D(t)\right)\label{s5}
\end{eqnarray}
The operating conditions of the chemostat may fluctuate in time, and they
are characterized by $D(t)$, the dilution rate, and $S^0(t)$, the concentration of the unprocessed nutrient at the inflow.
Both are non-negative functions of time, and additional assumptions for these functions will be introduced below.
Unprocessed nutrient is converted into processed nutrient by means of the enzyme. Processed nutrient is produced at rate $G(E,S)$.
The per capita consumption rate  of processed nutrient by both species is the same, and denoted by $\frac{1}{\gamma}F(P)$, where
$\gamma$ is the yield of this process, taking a value in $(0,1)$, and which is also assumed to be the same for both species.
The cooperator allocates a proportion $q$, a fixed value in $(0,1)$, of the processed nutrient it has consumed, towards its own growth. The remaining fraction $(1-q)$ goes towards the production of the enzyme which is needed to process the unprocessed nutrient. The cheater allocates all processed nutrient it has taken up towards growth. A cartoon of this chemostat model is presented in Figure $\ref{cartoon}$.
\begin{figure}[h]
\begin{center}
\includegraphics[height=4.5in]{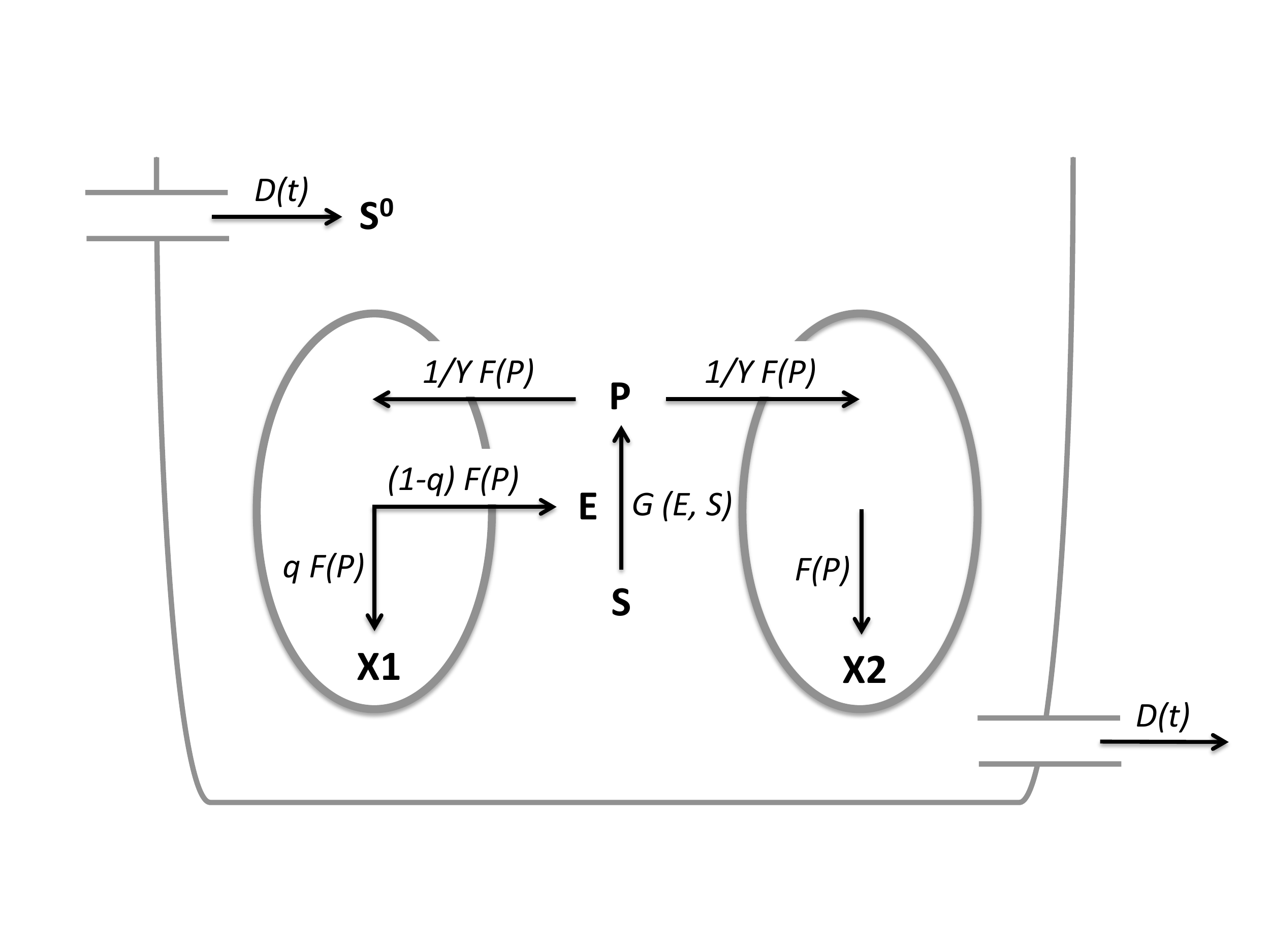}
\caption{Cartoon of the chemostat with two competing bacterial types: Species are indicated in bold and rates are indicated in italics. $X_1$, cooperator; $X_2$, cheater; $S$, nutrient substrate; $S^0$, unprocessed nutrient substrate in inflow; $P$, processed nutrient; $E$, enzyme; $D(t)$, dilution rate, $1/y F(P)$, per capita nutrient consumption rate; $F(P)$, growth rate; $q$ and $1-q$, proportions of nutrient allocated towards growth and enzyme production, respectively.}
\label{cartoon}
\end{center}
\end{figure}

We make the following minimal assumptions about the functions $G$ and $F$:
\begin{center}
{\bf H1}: $G:\reals_+\times \reals_+\to \reals_+$ is $C^1$, $G(0,S)=G(E,0)=0$ for all $E\geq 0$ and $S\geq 0$, and\\
$F:\reals_+\to \reals_+$ is $C^1$, and $F(0)=0$.
\end{center}
This assumption merely implies that there is no conversion of unprocessed nutrient into processed nutrient, when the enzyme or the unprocessed nutrient is missing; similarly there is no growth of either species, or of the enzyme, when the processed nutrient is missing.

For the dilution rate $D(t)$, and input nutrient concentration $S^0(t)$, we assume the following:
\begin{center}
{\bf H2}: The functions $D(t)$ and $S^0(t)$ are continuous for all $t\geq 0$, and
there exist positive bounds ${\underline D}$ and ${\bar D}$ such that ${\underline D}\leq D(t)\leq {\bar D}$ for all $t\geq 0$, and\\
positive bounds ${\underline S^0}$ and ${\bar S^0}$ such that ${\underline S^0}\leq S^0(t)\leq {\bar S^0}$ for all $t\geq 0$.
\end{center}

Our Main Result, which is proved in the Appendix, establishes the tragedy of the commons:
\begin{stel}\label{main} Assume that {\bf H1} and {\bf H2} hold, and assume that
the initial condition of $(\ref{s1})-(\ref{s5})$ is such that $X_2(0)>0$; that is, the cheater is present initially. Then
$(P(t),E(t),X_1(t),X_2(t))\to (0,0,0,0)$ as $t\to \infty$.
\end{stel}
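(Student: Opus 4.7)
The proof has one conceptual heart and three largely mechanical consequences. The heart is the observation that, because $q<1$, the cooperator is beaten by the cheater in the sense that a suitable ratio of the two biomasses decays exponentially regardless of what the nutrient does. Once $X_1\to 0$ is secured, enzyme production shuts off, processed nutrient washes out, and the cheater washes out too for lack of food. Throughout, I will use the positivity and boundedness of solutions, which should be established first.

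\textbf{Step 1 (boundedness).} Set $M := \gamma S+\gamma P+E+X_1+X_2$. A direct computation using \eqref{s1}--\eqref{s5} shows all $G(E,S)$ and $F(P)$ terms telescope, leaving
\[
\dot M(t)=D(t)\bigl(\gamma S^0(t)-M(t)\bigr).
\]
By comparison with the scalar linear equation and \textbf{H2}, $\limsup_{t\to\infty} M(t)\leq \gamma\bar S^0$, and $M$ is bounded on $[0,\infty)$. Hence $S,P,E,X_1,X_2$ are uniformly bounded, and so are $F(P(t))$ and $G(E(t),S(t))$ by \textbf{H1}.

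\textbf{Step 2 (the cooperator loses the ratio race).} Assuming $X_1(0)>0$ (the case $X_1(0)=0$ being trivial), both $X_1(t)$ and $X_2(t)$ are strictly positive for all $t$, so we may compute
\[
\frac{d}{dt}\ln\!\left(\frac{X_1}{X_2^{\,q}}\right)=\bigl(qF(P)-D\bigr)-q\bigl(F(P)-D\bigr)=(q-1)D(t)\leq -(1-q)\underline D<0.
\]
Integrating yields $X_1(t)/X_2(t)^{\,q}\to 0$ exponentially. Since $X_2$ is bounded by Step~1, this forces $X_1(t)\to 0$.

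\textbf{Step 3 (enzyme and processed nutrient wash out).} The enzyme equation $\dot E=(1-q)X_1 F(P)-D(t)E$ is linear in $E$ with decay rate bounded below by $\underline D$ and forcing $(1-q)X_1 F(P)\to 0$. A standard comparison lemma (if $\dot y\leq -\alpha y+g(t)$ with $\alpha>0$ and $g(t)\to 0$ then $\limsup y\leq 0$) gives $E(t)\to 0$. By \textbf{H1} and boundedness of $S$, $G(E(t),S(t))\to 0$ as well. Applying the same comparison lemma to $\dot P\leq G(E,S)-D(t)P$ yields $P(t)\to 0$.

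\textbf{Step 4 (cheater wash-out) and main obstacle.} From \eqref{s5}, $\ln X_2(t)=\ln X_2(0)+\int_0^t\!\bigl(F(P(s))-D(s)\bigr)ds$. Continuity of $F$ with $F(0)=0$ and $P\to 0$ give $F(P(s))\to 0$, while $D(s)\geq \underline D$, so for $s$ large the integrand is bounded above by $-\underline D/2$ and $\ln X_2(t)\to -\infty$, i.e.\ $X_2(t)\to 0$. The main conceptual obstacle is really Step~2: spotting that $X_1/X_2^{\,q}$ is the correct ``loser function'' whose logarithmic derivative collapses to the clean expression $(q-1)D(t)$, independent of $P$. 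The remaining obstacle is purely technical, namely invoking the vanishing-forcing comparison lemma carefully enough to handle time-varying $D(t)$; this is routine given \textbf{H2}.
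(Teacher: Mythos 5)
Your proof is correct and follows essentially the same route as the paper's first proof: the key observation that $\frac{d}{dt}\ln\bigl(X_1/X_2^{\,q}\bigr)=(q-1)D(t)$ is exactly the paper's comparison of $x_1$ with $y_2=x_2^{\,q}$, and the subsequent wash-out of $E$, $P$, $X_2$ proceeds by the same cascade of comparison arguments. The only cosmetic differences are that you work in unscaled variables with the weighted mass $\gamma S+\gamma P+E+X_1+X_2$, and that you handle the enzyme via a vanishing-forcing comparison lemma where the paper integrates the exact linear equation for $z=\frac{1-q}{q}x_1-e$; both are equally valid.
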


Figure~\ref{fig:1} depicts the tragedy in case of  mass action kinetics $G(E,S)=kES$, and Monod
uptake function $F(P)=mP/(a+P)$. The equations have been scaled such that $S^0$ and $D$ are both constant equal to one.
Initial data are as follows: $S(0)=1, P(0)=0, E(0)=0.8, X_1(0)=0.2, X_2(0)=0.03$.
The cooperator peaks early and declines sharply as the cheater continues to thrive, reaching a maximum followed by a rapid decline.

\begin{figure}[h]
\begin{center}
\includegraphics[height=4.75in]{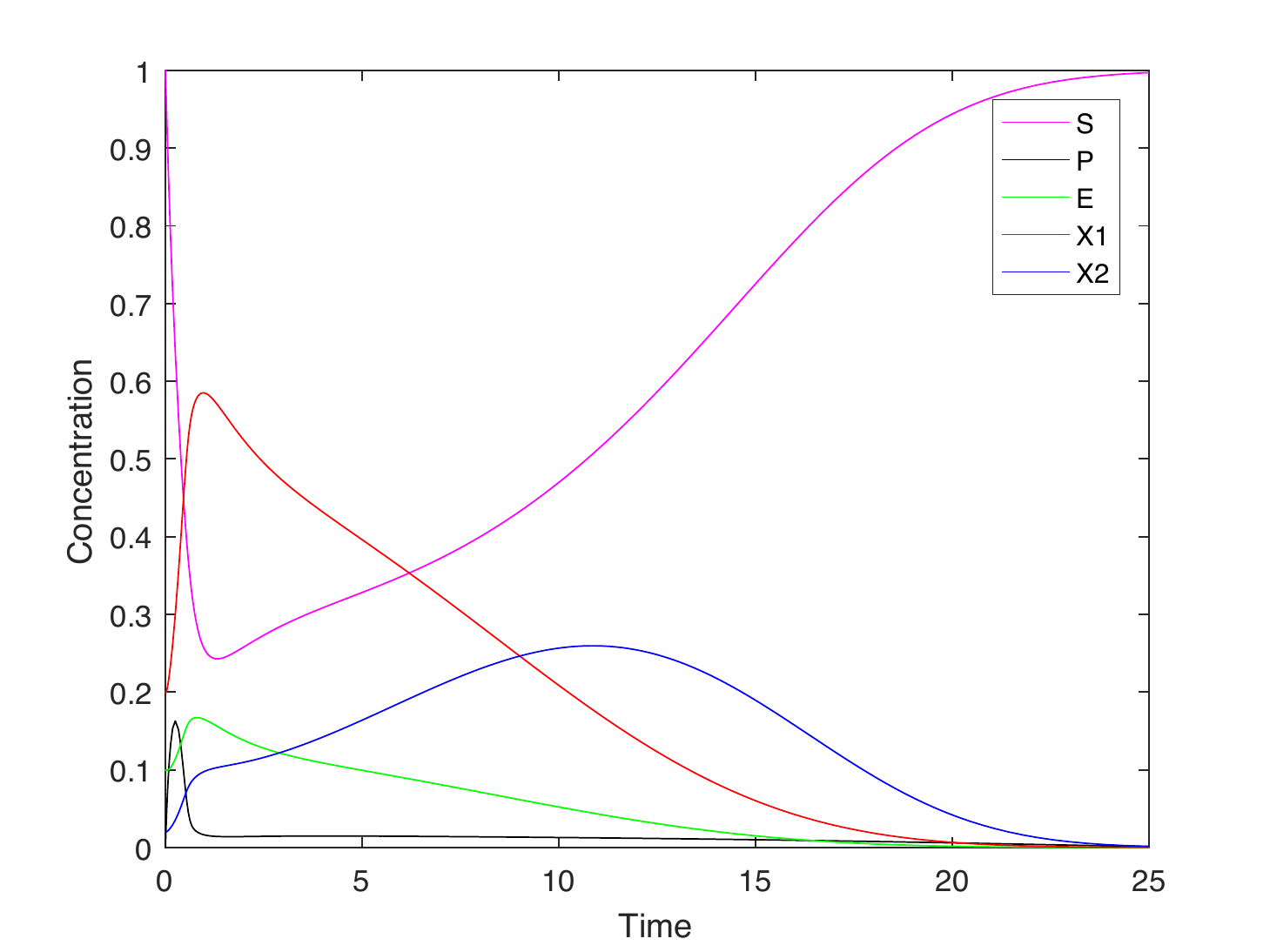}
\caption{Time series of the components of system $(\ref{s1})-(\ref{s5})$, where $S^0=1$, $D=1$, $q=0.8$, $\gamma=1$, $G(E,S)=kES$, $F(P)=mP/(a+P)$ with
$k=20$, $m=5$ and $a=0.05$. Initial data: $S(0)=1, P(0)=0, E(0)=0.1, X_1(0)=0.2, X_2(0)=0.02$.}
\label{fig:1}
\end{center}
\end{figure}

We show next that the tragedy also occurs in cases where the processing of the substrate into processed nutrient proceeds in more than one step. First, let us single out the biochemical reaction taking place in model $(\ref{s1})-(\ref{s5})$. Borrowing notation from (bio)chemistry, this
reaction can be represented as follows:
\begin{equation*}\label{biochem1}
S+E\to P+E,
\end{equation*}
where the reaction rate of formation of processed nutrient is $g(e,s)$, expressed in rescaled variables (see the Appendix for the rescaling). If we would only model this process, and ignore enzyme production, inflow of substrate, and outflow of substrate, enzyme and processed nutrient, we would have the following mass balance:
\begin{eqnarray*}
\frac{ds}{dt}(t)&=&-g(e,s)\\
\frac{de}{dt}(t)&=&0\\
\frac{dp}{dt}(t)&=&g(e,s)
\end{eqnarray*}

Suppose now that the biochemistry describing the conversion of substrate into processed nutrient takes occurs via an intermediate step:
\begin{equation*}\label{biochem2}
S+E\longleftrightarrow C\to P+E,
\end{equation*}
where $C$ represents an intermediate complex formed by the action of the enzyme on the substrate. Let us for simplicity assume that the reaction rates are of the mass action type (with respective rate constants $k_1$ and $k_{-1}$ for the first reversible reaction, and $k_2$ for the second reaction), then the mass balance model for this 2-step biochemical reaction network is:
\begin{eqnarray*}
\frac{ds}{dt}(t)&=&-k_1es+k_{-1}c\\
\frac{de}{dt}(t)&=&-k_1es+k_{-1}c+k_2c\\
\frac{dc}{dt}(t)&=&k_1es-k_{-1}c-k_2c\\
\frac{dp}{dt}(t)&=&k_2c
\end{eqnarray*}
The key property for this network is the conservation of the following quantity:
$$
s(t)+e(t)+2c(t)+p(t),
$$
which is easily verified by showing that its derivative with respect to time is zero. If we integrate this biochemical reaction network in our chemostat model, then we obtain the following scaled chemostat model:
\begin{eqnarray}
\frac{ds}{dt}(t)&=&D(t)(S^0(t)-s)-k_1es+k_{-1}c\label{bio1}\\
\frac{dp}{dt}(t)&=&k_2c-\left(x_1+x_2 \right)f(p)-D(t)p\label{bio2}\\
\frac{de}{dt}(t)&=&(1-q)x_1f(p)-k_1es+k_{-1}c+k_2c-D(t)e\label{bio3}\\
\frac{dc}{dt}(t)&=&k_1es-k_{-1}c-k_2c-D(t)c\label{bio4}\\
\frac{dx_1}{dt}(t)&=&x_1\left(qf(p)-D(t) \right)\label{bio5}\\
\frac{dx_2}{dt}(t)&=&x_2\left(f(p)-D(t)\right)\label{bio6}
\end{eqnarray}
We show in the last section of the Appendix that the tragedy continues to hold, in the sense that the conclusion of Theorem $\ref{main}$ remains valid for this more general system.

Of course, more complicated biochemical reaction networks of the digestion process, with multiple intermediate complexes $C_1,\dots C_n$:
$$
S+E \longleftrightarrow C_1 \longleftrightarrow \dots \longleftrightarrow C_n \to P+E
$$
could be used here instead, and the tragedy would continue to hold in such cases. The key property is that the mass balance equations corresponding to these networks should
exhibit a conservation law to guarantee the boundedness of the solutions of the chemostat model which integrates this biochemistry. Most
reasonable biochemical reaction networks do indeed possess such conservation laws.

\section{Cooperators can persist when cheaters are absent}
We have shown that when cheaters are present initially, the total population of cooperators and cheaters, is doomed. Next we investigate what happens when cheaters are absent by considering a special case of the chemostat model $(\ref{s1})-(\ref{s5})$ with $X_2=0$,
and constant operating parameters $D$ and $S^0$, which are both assumed to be positive:
\begin{eqnarray}
\frac{dS}{dt}(t)&=&D(S^0-S)-EG(S)\label{c1}\\
\frac{dP}{dt}(t)&=&EG(S)-\frac{1}{\gamma}X_1F(P)-DP\label{c2}\\
\frac{dE}{dt}(t)&=&(1-q)X_1F(P)-DE\label{c3}\\
\frac{dX_1}{dt}(t)&=&X_1\left(qF(P)-D \right) \label{c4}
\end{eqnarray}
Notice that the nutrient processing rate has been specialized to $EG(S)$, implying that it is proportional to the enzyme concentration $E$, and
a possibly nonlinear function of the nutrient $G(S)$. We replace assumption {\bf H1}, by the following assumption, which introduces
a monotonicity condition for $F$, and monotonicity and
concavity condition for $G$:
\begin{center}
{\bf H1'}: $G: \reals_+\to \reals_+$ is  $C^2$, $G(0)=0$, $dG/dS(S)>0$ for all $S>0$, and \\
$d^2G/dS^2(S)\leq 0$ for all $S\geq 0$, and \\
$F:\reals_+\to \reals_+$ is $C^1$, $F(0)=0$, $dF/dP(P)>0$ for all $P>0$.
\end{center}
The concavity condition for $G$ will be used to limit the number of
steady states of this system. The most commonly used choices for the functions for $F$ and $G$ are Monod functions (i.e. $F(P)=mP/(a+P)$, where $a$ and $m$ are positive parameters), which satisfy these assumptions. But note that a linear function $G(S)=kS$, with $k>0$ is allowed as well. In other words, the processing rate of nutrient (per unit of enzyme) does not necessarily have to saturate for large $S$-values.

The following dichotomy -global extinction, or bistability- is proved in the Appendix, and shows that the cooperator may persist when there are no cheaters; it refers to a scalar, nonlinear equation $(\ref{intersect})$, which is given in the Appendix as well.
\begin{stel}\label{bistab-unscaled}
Suppose that {\bf H1'} holds, and that $P^*:=F^{-1}\left(\frac{D}{q}\right)<S^0$.
\begin{enumerate}
\item
If equation $(\ref{intersect})$ has no solutions, then the washout steady state $(0,0,0,0)$ is globally asymptotically stable for
system $(\ref{c1})-(\ref{c4})$.
\item
If equation $(\ref{intersect})$ has two distinct  solutions, then
system $(\ref{c1})-(\ref{c3})$ has 3 steady states, the washout steady state $(0,0,0,0)$ and two positive steady states $E_1$ and $E_2$. The washout steady
state and $E_2$ are locally asymptotically stable, and $E_1$ is a saddle with a  three-dimensional stable manifold, and one-dimensional unstable manifold. The stable manifold is the common boundary of the regions of attraction of the washout steady state and $E_2$. Every solution of system $(\ref{c1})-(\ref{c4})$ converges to one of the three steady states. Persistence of the cooperator occurs for all initial conditions contained in the region of attraction of $E_2$, and initial conditions on the stable manifold of the saddle $E_1$.
\end{enumerate}
\end{stel}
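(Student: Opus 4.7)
My plan is to reduce the four-dimensional system to a two-dimensional autonomous limit by exploiting two linearly decaying observables, and then to analyze the planar phase portrait using Dulac's criterion together with the concavity of $G$.

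\textbf{Step 1 (reduction to a planar limit).} A direct computation shows that
$$
Y := S + P + \tfrac{X_1}{\gamma q}, \qquad W := E - \tfrac{1-q}{q}\,X_1
$$
satisfy the decoupled linear equations $\dot Y = D(S^0 - Y)$ and $\dot W = -DW$. Hence $Y(t)\to S^0$ and $W(t)\to 0$ exponentially, all trajectories are bounded in $\reals_+^4$, and the 2-manifold $\mathcal M := \{W=0,\ Y=S^0\}$ is invariant and exponentially attracting. Substituting $S=Y-P-X_1/(\gamma q)$ and $E=W+\tfrac{1-q}{q}X_1$ into $(\ref{c2})$ and $(\ref{c4})$ yields an asymptotically autonomous system in $(P,X_1)$ whose autonomous limit on $\mathcal M$ reads
\begin{align*}
\dot P &= \tfrac{1-q}{q}\,X_1\, G\!\left(S^0-P-\tfrac{X_1}{\gamma q}\right) - \tfrac{X_1}{\gamma} F(P) - DP,\\
\dot X_1 &= X_1\bigl(qF(P) - D\bigr).
\end{align*}

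\textbf{Step 2 (equilibria via the scalar equation).} Any positive equilibrium of the planar limit forces $P=P^*:=F^{-1}(D/q)$; setting $\dot P=0$ and writing $\bar S:=S^0-P^*-X_1/(\gamma q)$ then produces equation $(\ref{intersect})$ in the form $\ell(\bar S)=h(\bar S)$, with $\ell(S):=(D/\gamma)(S^0-S)$ and $h(S):=(1-q)G(S)(S^0-P^*-S)$ on $[0,S^0-P^*]$. From {\bf H1'} one checks $h(0)=h(S^0-P^*)=0$ and $h''(S)<0$ on the interval (since $G''\le 0$ and $G'>0$), so $h$ is strictly concave and $\psi := \ell-h$ is strictly convex with strictly positive endpoint values; hence $(\ref{intersect})$ has either no root, one tangential root, or two transversal roots, matching the stated dichotomy.

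\textbf{Step 3 (planar phase portrait).} Dulac's criterion on $\{X_1>0\}$ with multiplier $B:=1/X_1$ rules out periodic orbits, since a direct computation gives
$$
\tfrac{\partial}{\partial P}(Bf_1) + \tfrac{\partial}{\partial X_1}(Bf_2) \;=\; -\tfrac{1-q}{q}\,G'(\bar S) \;-\; \tfrac{F'(P)}{\gamma} \;-\; \tfrac{D}{X_1} \;<\; 0.
$$
Combined with boundedness, Poincar\'e--Bendixson forces every planar trajectory to approach an equilibrium. The Jacobian at washout is upper triangular with every diagonal entry equal to $-D$, so washout is always asymptotically stable. At any positive equilibrium I compute $\det J = -\bar X_1 F'(P^*)\,\psi'(\bar S)$ and $\operatorname{tr} J = -\tfrac{1-q}{q}\bar X_1 G'(\bar S) - \tfrac{\bar X_1}{\gamma}F'(P^*) - D < 0$; strict convexity of $\psi$ forces $\psi'<0$ at the smaller root (the high-biomass equilibrium $E_2$, a hyperbolic sink) and $\psi'>0$ at the larger root (the low-biomass equilibrium $E_1$, a hyperbolic saddle).

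\textbf{Step 4 (lift to four dimensions and identification of the separatrix).} Because $Y-S^0$ and $W$ decay exponentially, the theory of asymptotically autonomous semiflows (Thieme, Markus) together with the planar information gives that every $\omega$-limit set of $(\ref{c1})-(\ref{c4})$ lies in $\mathcal M$ and, being chain-recurrent for the planar limit (which has no cycles and only isolated equilibria), reduces to a single equilibrium. Case 1 then yields the global asymptotic stability of washout. In Case 2, $E_1$ lifts to a 4D hyperbolic equilibrium whose Jacobian inherits the planar eigenvalues $\lambda^s<0<\lambda^u$ together with two extra eigenvalues equal to $-D$ coming from the decoupled $(W,Y)$ block, so $E_1$ has a 3-dimensional stable manifold and a 1-dimensional unstable manifold; the two branches of the latter must terminate at equilibria and, by the planar classification, connect $E_1$ to washout and to $E_2$. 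A standard invariant-manifold/$\lambda$-lemma argument then identifies the stable manifold of $E_1$ as the common boundary of the two basins. The main technical obstacle lies precisely in this 4D lift: confirming via chain recurrence that no $\omega$-limit escapes $\mathcal M$ and, in Case 2, carefully arguing that the 3D stable manifold separates $\reals_+^4$ into exactly the two open basins of washout and $E_2$ — the planar analysis itself being elementary once concavity of $G$ is exploited.
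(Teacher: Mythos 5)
Your proposal is correct and shares the paper's overall skeleton, but the heart of the argument --- the planar phase-portrait lemma --- is proved by a genuinely different route. The reduction step is identical in substance: your observables $Y$ and $W$ are, up to the scaling $e=E/\gamma$, $x_1=X_1/\gamma$ and a linear recombination, the paper's $m=s+p+e+x_1$ and $v=e-Qx_1$ from $(\ref{newvar1})$--$(\ref{newvar2})$, and both arguments pass to the asymptotically autonomous planar limit $(\ref{l1})$--$(\ref{l2})$ and lift back via the Markus/Thieme theory. Where you diverge is in analyzing that limit: the paper changes variables once more to $w=p+x_1/q$, obtaining $(\ref{lt1})$--$(\ref{lt2})$, and reads everything off the nullclines --- a convex curve $x_1=h(w)$ against the line $x_1=q(w-p^*)$, giving $(\ref{intersect})$ --- then builds explicit trapping regions and estimates the eigenvector slopes $r_1,r_2$ at the saddle to pin down exactly where the stable manifold meets the boundary of $\Omega_{red}$ (Lemma \ref{planar}). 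You instead stay in $(P,X_1)$, encode the equilibria in an equivalent concave-versus-linear scalar equation in $\bar S$, classify them by the clean identity $\det J=-\bar X_1F'(P^*)\psi'(\bar S)$, and obtain convergence of every orbit from the Dulac multiplier $1/X_1$ plus Poincar\'e--Bendixson; amusingly, this is precisely the Dulac computation the authors themselves recorded in their supplementary notes. Your route is shorter and the determinant formula is more transparent than the paper's bracketed expression, but it yields less geometric information: the paper's nullcline construction is what actually locates the separatrix and the trapping regions depicted in Figure \ref{phase}, whereas you must recover the basin boundary abstractly from the $\lambda$-lemma. Two small points to tighten: when you invoke Poincar\'e--Bendixson you should note explicitly that the Dulac inequality (via Green's theorem on the open quadrant, where any closed invariant curve must lie since $\{X_1=0\}$ is invariant and the washout state is a sink) also excludes polycycles, not just periodic orbits; and the final four-dimensional separatrix claim, which you correctly flag as the main technical obstacle, is treated no less tersely by the paper itself, which delegates it to Appendix F of \cite{chemostat}.
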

Figure $\ref{fig:3}$ illustrates the persistence of the cooperator in the absence of cheaters, even when there is no 
processed nutrient, and only a little amount of enzyme initially.  
Notice that the initial condition used in the simulation for Figure $\ref{fig:3}$ is the same as the initial condition used for Figure $\ref{fig:1}$, and the model parameters are the same as well. Nevertheless, the fate of the cooperator is very different: it goes extinct when the cheater is present initially (Figure $\ref{fig:1}$), but persists otherwise (Figure $\ref{fig:3}$).

\section{Conclusion}
Although the tragedy of the commons is such a pervasive notion in the recent developments of theories about the evolution of cooperation,
we were unable to find any mathematical models that have rigorously analyzed an important group-level effect: the collapse of a population 
as a consequence of the dynamic interaction between cooperating and cheating individuals.
Here we have proved mathematically that the tragedy of the commons occurs in a chemostat system with cooperators that supply
a public good required for growth, and cheaters that do not. The sole difference between cooperators and cheaters in this system is the 
cost associated with public good production, which is only experienced by the cooperator. While the cooperator diverts a fraction of the 
ingested nutrient from growth to public good production, the cheater invests everything in growth. We assume that there are no pleiotropic 
costs to  cheating, and that the environment is well mixed, disregarding spatial 
structure as a major factor that promotes cooperation \cite{west2,west3}. Our results support the occurrence of the tragedy of the commons 
as a consequence of the selfish actions of individuals that result in the complete collapse of the shared public good \cite{hardin,rankin}. When this public good is essential for growth, the tragedy is manifested by the extinction of the whole group \cite{rainey,fiegna2,dandekar}.

To understand how the tragedy of the commons arises in the chemostat, we perform a simple thought-experiment. Assume that initially there are no cheaters ($X_2(0)=0$), and suppose that the assumptions of Theorem $\ref{bistab-unscaled}$ hold. If the initial condition of system $(\ref{c1})-(\ref{c4})$ is contained in the region of attraction of the locally stable steady state $E_2$, the solution will converge to, and eventually settles at this steady state.
Numerical simulations (Figure~\ref{fig:3})  show that this can happen even if there is only a  low initial amount of enzyme ($E(0)$ is small), and no initial processed nutrient ($P(0)=0$).  The cooperator-only population therefore
persists. However, if cheaters do suddenly appear -for example by mutation or by invasion into the environment-
even in extremely low numbers, Theorem $\ref{main}$ shows that the total population of cooperators and cheaters is doomed, confirming the tragedy of the commons. One of the two proofs of Theorem $\ref{main}$ gives clues on how this happens: The {\it ratio of cooperators to cheaters} will always decrease.
It may appear as if the cheaters will overtake the cooperators, and at least for a while, this is indeed what happens. 
However, in the long run there are not enough cooperators around to produce the enzyme levels required for nutrient processing, and this leads to the extinction of cheaters and cooperators alike.

\begin{figure}[h]
\begin{center}
\includegraphics[height=4.75in]{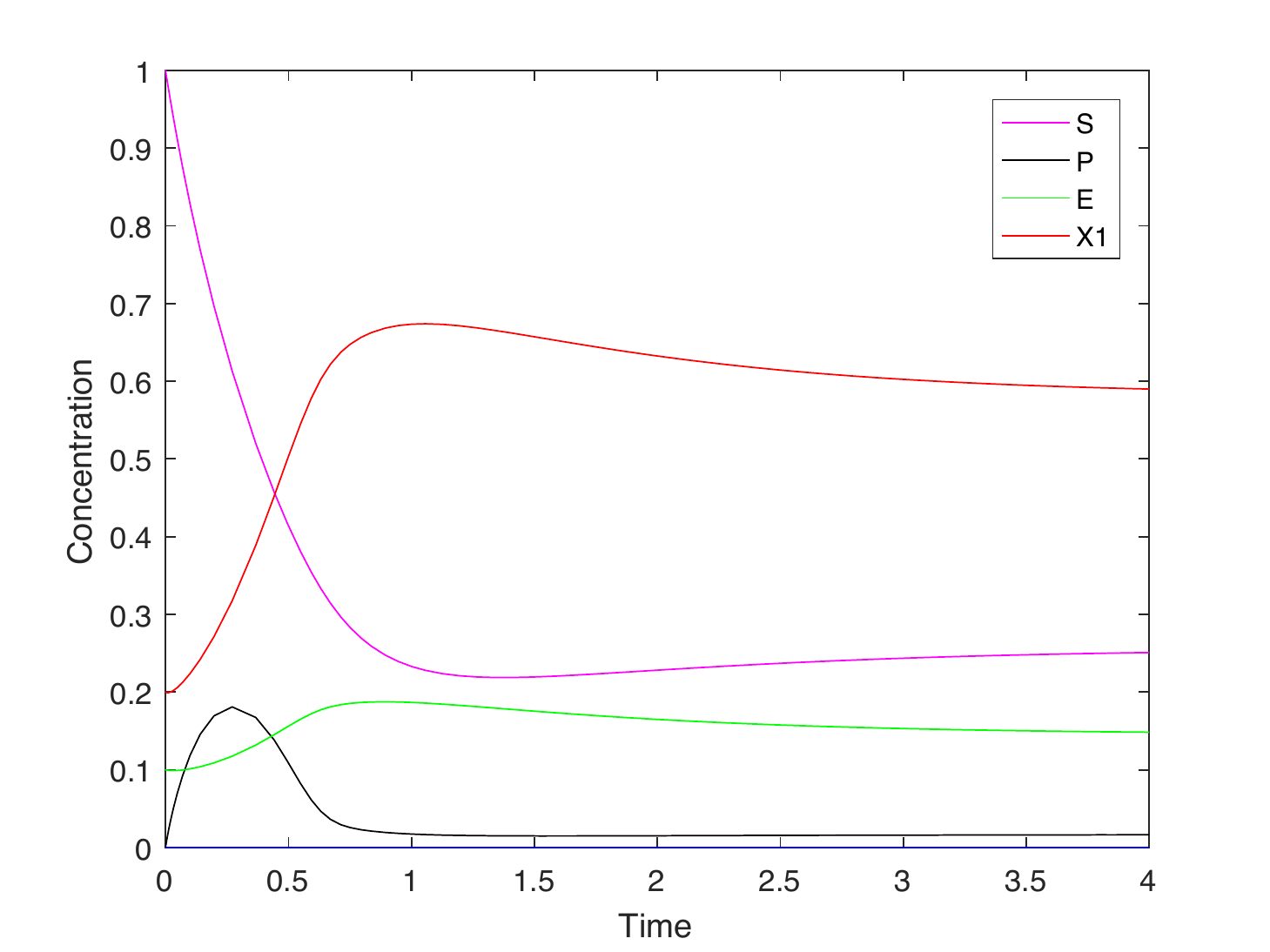}
\caption{Time series of the components of system $(\ref{c1})-(\ref{c4})$, where $S^0=1$, $D=1$, $q=0.8$, $\gamma=1$, $G(E,S)=kES$, $F(P)=mP/(a+P)$ with
$k=20$, $m=5$ and $a=0.05$. Initial data: $S(0)=1, P(0)=0, E(0)=0.1, X_1(0)=0.2$.}
\label{fig:3}
\end{center}
\end{figure}


To put our results in context, it is useful to quote from Hardin's original interpretation of the tragedy, see \cite{hardin}:

{\it The tragedy of the commons develops in this way. Picture a pasture open to all. It is to be expected that each herdsman will try to keep as many cattle as possible on the commons. Such an arrangement may work reasonably satisfactorily for centuries because tribal
wars, poaching, and disease keep the numbers of both man and beast well below the carrying capacity of the land. Finally, however, comes the day of reckoning, that is, the day when the long-desired goal of social stability becomes a reality. At this point, the inherent logic of the commons remorselessly generates tragedy.

As a rational being, each herdsman seeks to maximize his gain. Explicitly or implicitly, more or less consciously,  he asks, "What is the utility to me of adding one more animal to my herd?" This utility has one negative and one positive component:
\begin{enumerate}
\item
The positive component is a function of the increment of one animal. Since the herdsman receives all the proceeds from the sale of the additional animal, the positive utility is nearly $+1$.
\item
The negative component is a function of the additional overgrazing created by one or more animal. Since, however, the effects of overgrazing are shared by all herdsman, the negative utility for any particular decision making herdsman is only a fraction of $-1$.
\end{enumerate}

Adding together the component partial utilities, the rational herdsman concludes that the only sensible course
for him to pursue is to add another animal to his herd. And another; and another.... But this is the conclusion reached by each and every rational herdsman sharing a commons. Therein is the tragedy. Each man is locked into a system that compels him to increase his herd without limit-in a world that is limited. Ruin is the destination toward which all men rush, each pursuing his own best interest in a society that
believes in the freedom of the commons. Freedom in a commons brings ruin to all. }

It is interesting to note that Hardin's verbal description of the tragedy makes no explicit distinction between cooperators and cheaters, which is in contrast with recent interpretations of the 
tragedy in evolutionary biology \cite{rankin}. In natural populations there are many different ways individuals can cooperate or cheat, and clearly articulating the distinction between both types is necessary to correlate it to the occurrence of the tragedy \cite{rankin}. In its essence, the tragedy of the commons is the depletion of a common resource or a public good by the selfish action of competing individuals, thereby decreasing the average fitness of all individuals.

In the realm of game theory, the tragedy is described by a public good game or an $N$-person prisoner's dilemma \cite{hardin2}. In these types of games, selfishness is the superior strategy or Nash equilibrium \cite{hardin2}. 
While it can predict winning strategies, however, game theory does not generally consider the feedback of individual behavior phenotypes on group productivity.

According to \cite{rankin}, the exploitation of different types of resources can give rise to a tragedy of the commons. The first, 
which fits Hardin's analogy described above, involves the selfish exploitation of a common, extrinsic resource to the point 
of complete depletion, which causes all individuals to perish. The second type involves resources that are themselves the product of social behavior. In this case, the resource is a public good that is either formed by cooperation, or by restraining from conflict.  Cooperation via public goods is pervasive in microbial social behavior, and it is also the case that we have described here with our model $(\ref{s1})-(\ref{s5})$.  As we have seen, the tragedy arises when non-cooperating cheaters  reap the benefits provided by cooperators, 
without paying the costs. Microbial cooperative behaviors vulnerable to cheating include extracellular secretions such as enzymes  and metabolites \cite{gore,maclean,rainey}. A particularly compelling example is the altruistic investment in the non-spore parts of a 
multicellular fruiting body in myxobacteria \cite{fiegna}.

A different, more abstract, type of public good involves individuals restraining from potential conflict. 
A tragedy arises if the costs invested in compettitive behavior decrease overall productivity. In this case, less emphasis is placed on the depletion of extrinsic resources. A relevant example comes from another chemostat study which  investigated the outcome of social conflict between different metabolic strategies in yeast, respiration and fermentation \cite{maclean}. Respirers use glucose slowly but efficiently, whereas fermenters use glucose fast but wastefully. Thus, respiration is the strategy that provides the highest group-level benefit. Nevertheless, as shown experimentally and confirmed by simulation, fermenters are favored and fully displace respirers during glucose-limited growth in a chemostat \cite{maclean}. Notably, in this system, as in restraint from conflict in general, one strategy does not obligately depend on the other for its success.

As we have proven in this study, population collapse is inevitable in an obligate relationship, because the cooperator to cheater ratio always decreases. Eventually the cheater becomes so dominant that too little public good is produced by the cooperator, leading to the extinction of both types. The differential equation framework presented here will permit the in-depth analysis of mechanisms that promote cooperation.  The contribution of specific parameters or functional forms can be investigated.  For example, how much higher would growth yield or nutrient uptake rates have to be in a cooperator compared to a cheater to make public good cooperation sustainable?

\newpage
\section*{Appendix}
\subsection*{Proof of Theorem $\ref{main}$}
By scaling the state variables of system $(\ref{s1})-(\ref{s5})$ as follows:
\begin{eqnarray*}
s&=&S\\
p&=&P\\
e&=&\frac{E}{\gamma}\\
x_1&=&\frac{X_1}{\gamma}\\
x_2&=&\frac{X_2}{\gamma},
\end{eqnarray*}
and introducing the rescaled functions
\begin{eqnarray*}
g(e,s)&:=&G(\gamma e,s)\\
f(p)&:=&F(P),
\end{eqnarray*}
we obtain the following scaled model:
\begin{eqnarray}
\frac{ds}{dt}(t)&=&D(t)(S^0(t)-s)-g(e,s)\label{red1}\\
\frac{dp}{dt}(t)&=&g(e,s)-(x_1+x_2)f(p)-D(t)p\label{red2}\\
\frac{de}{dt}(t)&=&(1-q)x_1f(p)-D(t)e\label{red3}\\
\frac{dx_1}{dt}(t)&=&x_1\left(qf(p)-D(t) \right)\label{red4}\\
\frac{dx_2}{dt}(t)&=&x_2\left(f(p)-D(t)\right)\label{red5}
\end{eqnarray}

Notice that {\bf H1}, which holds for the rate functions $G(E,S)$ and $F(P)$, is also valid for the scaled rate functions $g(e,s)$ and $f(p)$.

The total mass of this scaled model,
$$
m=s+p+e+x_1+x_2,
$$
satisfies a linear equation:
\begin{equation}\label{mass}
\frac{dm}{dt}(t)=D(t)(S^0(t)-m),
\end{equation}
which is easily verified by adding all the equations of the scaled model. This equation, and the upper bound for $S^0(t)$ in  {\bf H2} imply that
the following family of compact sets
$$
\Omega_\epsilon=\{(s,p,e,x_1,x_2)\, |\,s\geq 0, p\geq 0, e\geq 0, x_1 \geq 0, x_2 \geq 0, m \leq  \bar S^0+\epsilon\},
$$
are forward invariant sets of the scaled model, for all $\epsilon\geq 0$.

The Main Result, Theorem $\ref{main}$, is an immediate Corollary of the following result, which is the
tragedy of the commons for the scaled model:
\begin{stel}\label{scaled-tragedy}Assume that {\bf H1} and {\bf H2} hold, and assume that
the initial condition of $(\ref{red1})-(\ref{red5})$ is such that $x_2(0)>0$; that is, the cheater is present initially. Then
$(p(t),e(t),x_1(t),x_2(t))\to (0,0,0,0)$ as $t\to \infty$.
\end{stel}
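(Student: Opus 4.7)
The approach is to exploit the cooperator's intrinsic growth-rate disadvantage $q<1$ to force $x_1(t)\to 0$, and then to cascade this through the remaining equations, each of which can be rewritten as a linear equation driven by a forcing term that vanishes as $t\to\infty$. I would first record a priori bounds: the scalar equation $(\ref{mass})$, combined with {\bf H2} so that $D(t)\ge \underline D>0$ and $S^0(t)\le \bar S^0$, gives $\limsup_{t\to\infty} m(t)\le \bar S^0$, so all five state components are uniformly bounded on $[0,\infty)$. By {\bf H1} and continuity, $f(p(t))$ and $g(e(t),s(t))$ are then bounded as well, and $g(e,s)\to 0$ whenever $e\to 0$ (using $g(0,s)=0$ and uniform continuity on compact sets).

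Next I would prove $x_1(t)\to 0$. Since $x_2(0)>0$ and positivity is preserved, $x_2(t)>0$ for all $t$; the case $x_1\equiv 0$ is trivial, so I may assume $x_1(t)>0$. From $(\ref{red4})$ and $(\ref{red5})$ one has $(\ln x_1)'=qf(p)-D$ and $(\ln x_2)'=f(p)-D$, so the combination $q\ln x_2-\ln x_1$ cancels the shared $f(p)$ term:
$$
\frac{d}{dt}\bigl(q\ln x_2-\ln x_1\bigr)=(1-q)\,D(t)\;\ge\;(1-q)\,\underline D\;>\;0.
$$
Integrating yields $x_1(t)\le C\,x_2(t)^q\,e^{-(1-q)\underline D\,t}$, and boundedness of $x_2$ then forces $x_1(t)\to 0$.

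Finally I would cascade this conclusion to $e$, $p$, and $x_2$. Writing $(\ref{red3})$ as $\dot e=u(t)-D(t)e$ with $u(t)=(1-q)\,x_1(t)\,f(p(t))\to 0$ and $D(t)\ge \underline D$, a standard variation-of-constants estimate gives $e(t)\to 0$. The step I expect to require the most care is the joint treatment of $p$ and $x_2$, because they are nonlinearly coupled through $-(x_1+x_2)f(p)$ in $(\ref{red2})$, so no driven-linear argument applies to either equation in isolation. The crucial observation is that summing $(\ref{red2})$ and $(\ref{red5})$ cancels the $x_2 f(p)$ contribution and leaves
$$
\frac{d}{dt}\bigl(p+x_2\bigr)=g(e,s)-x_1 f(p)-D(t)\bigl(p+x_2\bigr),
$$
whose forcing $g(e,s)-x_1 f(p)$ tends to zero by the boundedness step together with $e\to 0$ and $x_1\to 0$. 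The same linear argument then gives $p(t)+x_2(t)\to 0$; non-negativity of $p$ and $x_2$ forces each to tend to zero, completing the proof.
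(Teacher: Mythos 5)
Your proof is correct, and its engine is identical to the paper's first proof: the quantity $q\ln x_2-\ln x_1$ is exactly the logarithm of the ratio the paper forms with $y_2=x_2^q$, and both arguments extract the factor $\exponent^{-(1-q)\int_0^t D(\tau)\,d\tau}$ to conclude $x_1(t)\to 0$ from the boundedness of $x_2$. The cascade afterwards differs only in bookkeeping. For $e$, the paper integrates the combination $z=Qx_1-e$, which satisfies $\dot z=-D(t)z$ exactly, rather than treating $(\ref{red3})$ as a linear equation with vanishing forcing; both are fine. For the final step, your assertion that no driven-linear argument applies to $p$ or $x_2$ in isolation is not quite right: in $(\ref{red2})$ the consumption term $-(x_1+x_2)f(p)$ is non-positive and can simply be dropped, giving $\frac{dp}{dt}\le g(e,s)-\underline{D}\,p$ and hence $p\to 0$ by comparison, after which $f(p(t))<\underline{D}$ eventually and $(\ref{red5})$ forces $x_2\to 0$; this is the paper's route. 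Your alternative of summing $(\ref{red2})$ and $(\ref{red5})$ so that the $x_2f(p)$ terms cancel, and then applying the variation-of-constants estimate to $p+x_2$ before splitting the limit by non-negativity, is a valid and slightly more unified way to finish. No gaps.
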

\begin{proof}
Given the initial condition, we can find an $\epsilon\geq 0$ such that
the solution $(s(t),p(t),e(t),x_1(t),x_2(t))$ is contained in the compact set $\Omega_{\epsilon}$ for all $t\geq 0$.
We shall present two proofs. The first involves a (biologically nontrivial) transformation of one of the system's variables. The second considers the ratio of cooperators and cheaters, a biologically natural measure, and reveals that this ratio does not increase. \\[2ex]
{\bf Proof 1}: Consider the variable $y_2=x_2^q$. Then
$$
\frac{dy_2}{dt}(t)=y_2(qf(p)-qD(t))
$$
Equation $(\ref{red4})$, and the above equation can be integrated:
\begin{eqnarray*}
x_1(t)&=&x_1(0)\exponent^{\int_0^t qf(p(\tau))-D(\tau)d \tau}\\
y_2(t)&=&y_2(0)\exponent^{\int_0^t qf(p(\tau))-qD(\tau)d \tau}>0,\textrm{ for all } t \textrm{ since } y_2(0)=x_2^q(0)>0,
\end{eqnarray*}
Dividing the first by the second equation yields:
$$
x_1(t)=y_2(t)\frac{x_1(0)}{y_2(0)}\exponent^{-(1-q)\int_0^t D(\tau)d \tau }\leq B\frac{x_1(0)}{y_2(0)}\exponent^{-(1-q){\underline D}t},
$$
where we have used the lower bound for $D(t)$, see {\bf H2}, to establish the last inequality, and the positive bound $B$ for $y_2(t)$ which exists because the solution, and therefore also $x_2(t)$, is bounded. From this follows that
$\lim_{t\to \infty} x_1(t)=0$, where the convergence is at least exponential with rate $(1-q){\underline D}$.\\
Next we consider the dynamics of the variable $z=Qx_1-e$, where $Q=(1-q)/q$:
$$
{\dot z}=-D(t)z,
$$
which is solvable, yielding $z(t)=z(0)\exponent^{-\int_0^t D(\tau) d \tau}$. 
The lower bound ${\underline D}$ for $D(t)$ in {\bf H2}, then implies that $z(t)\to 0$ at a rate which is 
at least exponential with rate ${\underline D}$. This fact, together with the convergence of $x_1(t)$ to zero established above, implies that $e(t)\to 0$ as well.\\
Next, consider the $p$-equation $(\ref{red2})$. There holds that for each ${\tilde \epsilon}> 0$:
$$
\frac{dp}{dt}(t)\leq {\tilde \epsilon}-{\underbar D}p,\textrm{ for all sufficiently large } t.
$$
Notice that we used that $g(0,s)=0$ for all $s\geq 0$, and the continuity of $g$, see {\bf H1}, as well as {\bf H2} for the lower bound of $D(t)$.
It follows that $\limsup_{t\to \infty} p(t)\leq {\tilde \epsilon}/{\underbar D}$, and since ${\tilde \epsilon}>0$ was arbitrary, there follows that $p(t)\to 0$.\\
Finally, we consider the $x_2$-equation $(\ref{red5})$. Since $p(t)\to 0$ and $f(0)=0$ by {\bf H1}, there holds that $f(p(t))\leq {\underbar D}/2$ for all $t$ sufficiently large. Consequently,
$$
\frac{dx_2}{dt}(t)\leq-\frac{{\underbar D}}{2} x_2,\textrm{ for all sufficiently large } t,
$$
and thus $x_2(t)\to 0$, concluding the proof in this case.\\[2ex]
{\bf Proof 2}:
Equations $(\ref{red4})$ and $(\ref{red5})$ can be integrated:
\begin{eqnarray}
x_1(t)&=&x_1(0)\exponent^{\int_0^t qf(p(\tau))-D(\tau)d \tau} \label{x1-sol}\\
x_2(t)&=&x_2(0)\exponent^{\int_0^t f(p(\tau))-D(\tau)d \tau}>0,\textrm{ for all } t\textrm{ since } x_2(0)>0.\label{x2-sol}
\end{eqnarray}
Thus, the ratio $r(t)=x_1(t)/x_2(t)$ is well-defined and satisfies the differential equation:
$$
\frac{dr}{dt}(t)=-(1-q)f(p)r,
$$
which shows that the ratio does not increase. The solution of this equation is:
\begin{equation}\label{r-sol}
r(t)=r(0)\exponent^{-(1-q)\int_0^tf(p(\tau))d \tau}
\end{equation}
We distinguish two cases depending on the integrability of the function $f(p(t))$:

{\bf Case 1}: $\int_0^\infty f(p(\tau))d\tau=\infty$.\\
It follows from  $(\ref{r-sol})$ that $r(t)\to 0$, and hence also $x_1(t)\to 0$ because $x_2(t)$ is bounded.  Proof of convergence of $e(t), p(t)$ and $x_2(t)$ to zero now proceeds as in {\bf Proof 1}.

{\bf Case 2}: $\int_0^\infty f(p(\tau))d\tau<\infty$.\\
It follows from $(\ref{x1-sol})-(\ref{x2-sol})$ that both $x_1(t)\to 0$ and $x_2(t)\to 0$, because 
$0<{\underline D}\leq D(t)$ for all $t$, by {\bf H2}. Proof of convergence of $e(t)$ and $p(t)$ to zero now proceeds as in {\bf Proof 1} as well.\\
\end{proof}

\subsection*{Proof of Theorem $\ref{bistab-unscaled}$}
By scaling the state variables of system $(\ref{c1})-(\ref{c4})$ in the usual way as follows:
\begin{eqnarray*}
s&=&S\\
p&=&P\\
e&=&\frac{E}{\gamma}\\
x_1&=&\frac{X_1}{\gamma},
\end{eqnarray*}
and switching to lower case letters for the rate functions:
\begin{eqnarray*}
g(s)&:=&G(S)\\
f(p)&:=&F(P),
\end{eqnarray*}
and for the chemostat's constant operating parameters:
\begin{eqnarray*}
d&:=&D\\
s^0&:=&S^0,
\end{eqnarray*}
we obtain the following scaled model:
\begin{eqnarray}
\frac{ds}{dt}(t)&=&d(s^0-s)-eg(s)\label{cs1}\\
\frac{dp}{dt}(t)&=&eg(s)-x_1f(p)-dp\label{cs2}\\
\frac{de}{dt}(t)&=&(1-q)x_1f(p)-de\label{cs3}\\
\frac{dx_1}{dt}(t)&=&x_1\left(qf(p)-d \right)\label{cs4}
\end{eqnarray}
Notice that {\bf H1'}, which holds for the rate functions $G(S)$ and $F(P)$, is also valid for the rate functions $g(s)$ and $f(p)$.

We introduce two new variables:
\begin{eqnarray}
m&=&s+p+e+x_1\label{newvar1}\\
v&=&e-Qx_1,\textrm{ where }Q:=\frac{1-q}{q}\label{newvar2}
\end{eqnarray}
and choose to drop the $s$ and $e$-equations from system $(\ref{cs1})-(\ref{cs4})$, transforming it to:
\begin{eqnarray}
\frac{dm}{dt}(t)&=&d(s^0-m) \label{t1}\\
\frac{dv}{dt}(t)&=&-dv\label{t2}\\
\frac{dp}{dt}(t)&=&(v+Qx_1)g(m-p-v-x_1/q)-xf(p)-dp\label{t3}\\
\frac{dx_1}{dt}(t)&=&x_1\left(qf(p)-d \right)\label{t4}
\end{eqnarray}
with state space $\{p,x_1\geq 0\,:\, m\geq p+v+x_1/q,\, v+Qx_1\geq 0\}$, which is forward invariant.  The variables $m(t)$ and $v(t)$ converge exponentially to
$s^0$ and $0$ respectively, hence it is natural to study the limiting system:
\begin{eqnarray}
\frac{dp}{dt}(t)&=&Qx_1g(s^0-p-x_1/q)-x_1f(p)-dp\label{l1}\\
\frac{dx_1}{dt}(t)&=&x_1(qf(p)-d)\label{l2}
\end{eqnarray}
which is defined on the state space $\{p,x_1\geq 0\,:\, p+x_1/q\leq s^0\}$, which is forward invariant. It turns out to be more convenient to transform this system using the variable:
\begin{equation}\label{w-var}
w=p+\frac{1}{q}x_1,
\end{equation}
instead of the variable $p$, yielding:
\begin{eqnarray}
\frac{dw}{dt}(t)&=&Qx_1g(s^0-w)-dw \label{lt1}\\
\frac{dx_1}{dt}(t)&=&x_1(qf(w-x_1/q)-d) \label{lt2}
\end{eqnarray}
with state space $\Omega_{red}=\{x_1\geq 0:\, x_1/q\leq w\leq s^0\}$, which is forward invariant.

We start our analysis of system $(\ref{lt1})-(\ref{lt2})$ by determining the nullclines. The $w$-nullcline is given by:
\begin{equation}\label{null-w}
x_1=h(w),\textrm{ where }h(w)=\frac{d}{Q}\frac{w}{g(s^0-w)}.
\end{equation}
The main properties of the function $h(w):[0,s^0)\to \reals_+$ are:
\begin{enumerate}
\item $h(0)=0,\textrm{ and }\lim_{w\to s^0}h(w)=\infty$.
\item $h'(w)=\frac{d}{Q}\frac{g(s^0-w)+wg'(s^0-w)}{g^2(s^0-w)}>0$, by {\bf H1'}.
\item $h''(w)=\frac{d}{Q}\frac{-wg''(s^0-w)g^2(s^0-w)+2g(s^0-w)g'(s^0-w)\left(g(s^0-w)+wg'(s^0-w) \right)}{g^4(s^0-w)}>0$, by ${\bf H1'}$.
\end{enumerate}
Thus, the function $h(w)$ is zero at zero, is increasing with a vertical asymptote at $w=s^0$,  and it is strictly convex.

To obtain a nontrivial $x_1$-nullcline in the state space, we make one more assumption, namely:
\begin{equation}\label{pstar}
p^*:=f^{-1}\left(\frac{d}{q}\right)\textrm{ satisfies } p^*< s^0.
\end{equation}
This assumption merely expresses that the cooperator has a break-even steady state concentration for the processed nutrient at a level
below the input nutrient concentration $s^0$, see equation $(\ref{cs4})$. In addition to the horizontal axis $x_1=0$, there is a nontrivial
$x_1$-nullcline which is particularly easy to express using $p^*$, as the graph of a linear function:
\begin{equation}\label{null-x_!}
x_1=q(w-p^*).
\end{equation}
Any nonzero steady states of the limiting system are given by the intersection of the $w$- and the nontrivial $x_1$-nullcine, which are determined by the solutions of the equation:
\begin{equation}\label{intersect}
h(w)=q(w-p^*),\; 0\leq w<s^0.
\end{equation}
In view of the convexity of the function $h$, there are either no, one, or two solutions to $(\ref{intersect})$, and generically there are none, or two. We will construct the phase portrait of the limiting system in these two cases.
\begin{figure}[h]
\begin{center}
\includegraphics[width=6in]{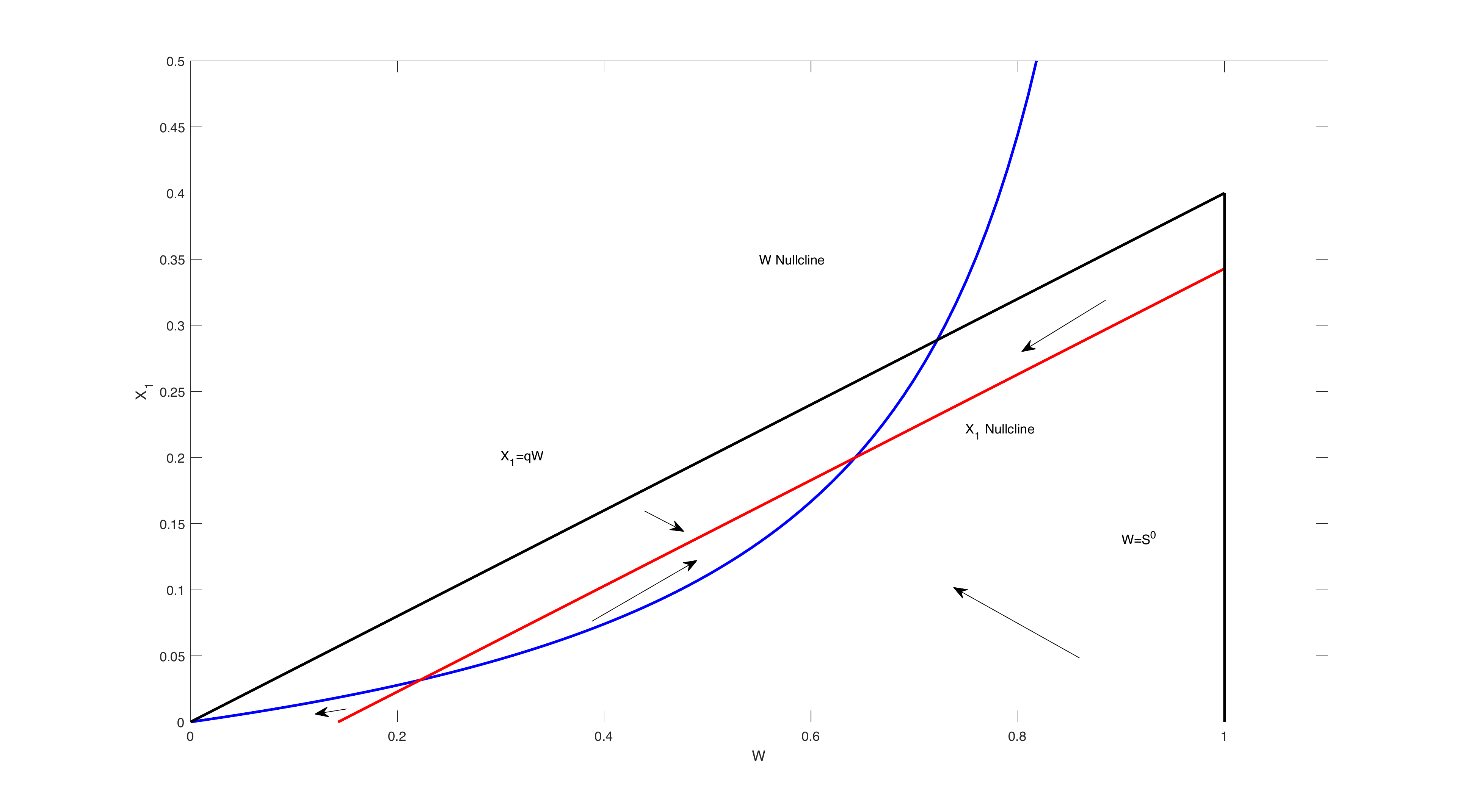}
\caption{Phase plane of the limiting system $(\ref{lt1})-(\ref{lt2})$ in case there are two nonzero steady states.}
\label{phase}
\end{center}
\end{figure}
\begin{lemma}\label{planar}
Suppose that {\bf H1'} and $(\ref{pstar})$ hold.
\begin{enumerate}
\item
If equation $(\ref{intersect})$ has no solutions, then system $(\ref{lt1})-(\ref{lt2})$ has a unique steady state $(0,0)$ which is globally asymptotically stable with respect to initial conditions in $\Omega_{red}$.
\item
If equation $(\ref{intersect})$ has two solutions $w_1<w_2$, then system $(\ref{lt1})-(\ref{lt2})$ has 3 steady states, $(0,0)$, $(w_1,h(w_1))$ and
$(w_2,h(w_2))$. The steady states $(0,0)$ and $(w_2,h(w_2))$ are locally asymptotically stable, and $(w_1,h(w_1))$ is a saddle with one-dimensional stable manifold $W_s$, and one-dimensional unstable manifold $W_u$. The stable manifold $W_s$ intersects the boundary of $\Omega_{red}$ in two points, one on the boundary $x_1=qw$, the other on the boundary $w=s^0$, forming a separatrix: Initial conditions below $W_s$ give rise to solutions converging to $(0,0)$, whereas
initial conditions above $W_s$ give rise to solutions converging to $(w_2,h(w_2))$, yielding bistability in the limiting system, see Figure $\ref{phase}$.
\end{enumerate}
\end{lemma}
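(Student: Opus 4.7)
My plan is to prove the lemma in four stages: verify forward invariance of $\Omega_{red}$; linearize at each equilibrium; show the system is cooperative and invoke planar monotone convergence theory to rule out periodic orbits; and finally piece together the phase portrait in each case. For invariance I would simply check the three edges: on $x_1 = 0$ the line is invariant ($\dot{x}_1 = 0$); on $w = s^0$ one has $\dot{w} = -ds^0 < 0$, directed inward; and on the diagonal $w = x_1/q$ (that is, $p = w - x_1/q = 0$) one computes $\dot{p} = \dot{w} - \dot{x}_1/q = (1-q)w\,g(s^0-w) \geq 0$, again inward. So $\Omega_{red}$ is compact and forward invariant and every solution is bounded.

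For the linearization, the Jacobian at the origin is upper triangular with both eigenvalues equal to $-d$, so $(0,0)$ is locally asymptotically stable. At an interior equilibrium $(w^*, h(w^*))$, using the two nullcline identities $Qx_1^* g(s^0-w^*) = dw^*$ and $qf(p^*) = d$, the trace is manifestly negative and a short computation gives
\[
\det J = x_1^*\, f'(p^*)\,Q g(s^0-w^*)\,\bigl[h'(w^*) - q\bigr].
\]
In Case 2 the convex graph of $h$ is crossed by the line $q(w - p^*)$ from above between the two roots, hence $h'(w_1) < q < h'(w_2)$, making $(w_1, h(w_1))$ a saddle ($\det J < 0$) and $(w_2, h(w_2))$ locally asymptotically stable.

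Next I observe that
\[
\frac{\partial \dot{w}}{\partial x_1} = Q g(s^0-w) > 0, \qquad \frac{\partial \dot{x}_1}{\partial w} = x_1\, q\, f'(w - x_1/q) \geq 0,
\]
so the vector field is cooperative on $\Omega_{red}$. Hirsch's theorem for planar monotone systems then rules out non-constant periodic orbits and ensures that every bounded orbit converges to an equilibrium. In Case 1 the washout equilibrium is the only equilibrium, so every orbit must converge to $(0,0)$, yielding item 1.

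In Case 2, every orbit limits to one of the three equilibria. The one-dimensional stable manifold $W_s$ of the saddle is tangent at the saddle to the unique eigendirection with components of opposite sign (the standard Perron-type sign pattern for a cooperative saddle), while each branch of the unstable manifold $W_u$ is tangent to the positive cone; by monotone convergence the two branches of $W_u$ must run to $(0,0)$ and $(w_2, h(w_2))$ respectively. Globally $W_s$ is a strictly decreasing $C^1$ arc which, in backward time, cannot accumulate at the saddle, cannot limit on a periodic orbit, and cannot terminate at any of the other equilibria, so it must exit $\Omega_{red}$ through the two non-invariant edges $w = s^0$ and $w = x_1/q$. The resulting separatrix splits $\Omega_{red}$ into two open regions, and a monotone sandwich between trajectories on the two branches of $W_u$ identifies those regions with the basins of $(0,0)$ and $(w_2, h(w_2))$. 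I expect the main obstacle to be precisely this last geometric step: upgrading the qualitative picture that $W_s$ is a separatrix to a clean proof that the two components are exactly the basins of the two stable equilibria, which will rely delicately on the monotone order together with the positive-cone tangency of $W_u$ at the saddle.
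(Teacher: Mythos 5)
Your proposal is correct in its substance, and the local analysis coincides with the paper's: the invariance check on the three edges, the triangular Jacobian at the origin, and the determinant identity $\det J = x_1^*f'(p^*)Qg(s^0-w^*)[h'(w^*)-q]$ (which matches the paper's bracketed expression after substituting $x_1^*=h(w^*)$) leading to $h'(w_1)<q<h'(w_2)$ are exactly the computations the paper performs. Where you genuinely diverge is in the global argument. The paper never mentions monotonicity: it partitions $\Omega_{red}$ by the two nullclines into trapping regions, shows solutions in the complementary regions must enter them, and locates $W_s$ and $W_u$ by estimating the eigenvector slopes $r_1<0$ and $h'(w_1)<r_2<q$ followed by backward integration through the NW and SE regions. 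You instead observe that the off-diagonal Jacobian entries $Qg(s^0-w)$ and $x_1qf'(w-x_1/q)$ are nonnegative, so the system is cooperative, and you invoke Hirsch's planar convergence theorem to get convergence of every bounded orbit to an equilibrium, the Perron sign structure to place the eigenvectors, and unorderedness of $W_s$ to see it is a decreasing arc. What your route buys is economy and immediacy in Case 1 (one equilibrium plus convergence gives global stability with no nullcline geometry at all) and a conceptual reason why $W_s$ is a monotone separatrix; what it costs is reliance on monotone-systems machinery that the paper deliberately avoids, and, as you yourself flag, the final identification of the two components of $\Omega_{red}\setminus W_s$ with the two basins still needs the connectedness/openness argument (each component, being connected and covered by the disjoint open basins, lies in exactly one, and contains exactly one stable equilibrium) to be written out --- a step the paper handles, somewhat tersely, with its explicit trapping regions. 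Both proofs are sound; yours is shorter at the top level but outsources more, while the paper's is elementary and self-contained.
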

\begin{proof}
\begin{enumerate}
\item If $(\ref{intersect})$ has no solutions, then the $w$-nullcline and the nontrivial $x_1$-nullcline do not intersect, and thus $(0,0)$ is the only steady state of the system. The state space $\Omega_{red}$ is divided in 3 parts by the two nullclines, and it is easy to see that the region enclosed between both nullclines and the boundary of $\Omega_{red}$ is
a trapping region in which solutions monotonically converge to the zero steady state. Solutions starting in the region above the $w$-nullcline are monotonically decreasing (increasing) in the $x_1$-component ($w$-component), but since that region does not contain nontrivial steady states, these solutions must enter the trapping region between both nullclines. Similarly, solutions that start below the $x_1$-nullcline are monotonically increasing (decreasing) in the $x_1$-component ($w$-component), and must enter the trapping region as well. This concludes the proof of the assertion that $(0,0)$ is a globally asymptotically stable steady state.
\item If equation $(\ref{intersect})$ has two solutions $w_1<w_2$, then the nullclines intersect in two distinct points, yielding the positive
steady states $(w_1,h(w_1))$ and $(w_2,h(w_2))$, see Figure $\ref{phase}$. The third steady state is $(0,0)$. It is not hard to see that
the state space is now divided into 5 parts, 3 of which are trapping regions. Each of these trapping regions is enclosed by arcs of the nullclines or segments of the boundary of $\Omega_{red}$ which either connect pairs of steady states, or a steady state and a point on the boundary of the state space $\Omega_{red}$, see
Figure $\ref{phase}$. There are also 2 remaining regions, which we call the NW and SE regions, for obvious reasons.

To complete the phase plane analysis, we perform a linearization of the system at the steady states. The Jacobian matrix of the limiting system is
$$
\begin{pmatrix}
-Qx_1g'(s^0-w)-d& Qg(s^0-w)\\
x_1qf'(w-x_1/q)& (qf(w-x_1/q)-d)-x_1f'(w-x_1/q)
\end{pmatrix}
$$
We focus on the middle steady state $(w_1,h(w_1))$, where the Jacobian evaluates to:
$$
J_1=\begin{pmatrix}
-d\left(\frac{w_1g'(s^0-w_1)}{g(s^0-w_1)}+1 \right)& Qg(s^0-w)\\
\frac{d}{Q}q\frac{w_1}{g(s^0-w_1)}f'(p^*)& -\frac{d}{Q}\frac{w_1}{g(s^0-w_1)}f'(p^*)
\end{pmatrix}
$$
Clearly, the trace is negative, and the determinant is given by:
$$
\frac{d}{Q}\frac{w_1}{g(s^0-w_1)}f'(p^*)\left[d\left(\frac{w_1g'(s^0-w_1)}{g(s^0-w_1)}+1 \right)-qQg(s^0-w_1) \right]
$$
We claim that this determinant is negative, which implies that this steady state is a saddle. This follows from the fact that the slope of the tangent line to the graph of the convex function $h(w)$ at $w=w_1$ must be smaller than the slope of the line $x_1=q(w-p^*)$, which is of course $q$:
$$
h'(w_1)<q.
$$
Recalling the derivative of $h(w)$ given above, it can be shown that this latter inequality is equivalent to the expression in the square brackets in the determinant being negative, which establishes the claim.

Incidentally, a similar argument can be used to show that the determinant of the linearization at the
steady state $(w_2,h(w_2))$ is positive, because in this case $h'(w_2)>q$. Since the trace of that linearization is also negative, this shows that $(w_2,h(w_2))$ is locally asymptotically stable. The linearization at $(0,0)$ is triangular with both diagonal entries equal to $-d$, from which also follows that $(0,0)$ is locally asymptotically stable.

Now we turn to the question of the location of the one-dimensional stable and unstable manifolds $W_s$, respectively $W_u$, of the saddle.
Therefore, we determine the eigenvectors of the negative eigenvalue $\lambda_1$, and the positive eigenvalue $\lambda_2$ of the  Jacobian matrix $J_1$. We have that
$$
J_1\begin{pmatrix}1\\ r_1 \end{pmatrix}=\lambda_1\begin{pmatrix}1\\ r_1 \end{pmatrix}\textrm{ with }\lambda_1<0,
\textrm{ and }
J_1\begin{pmatrix}1\\ r_2 \end{pmatrix}=\lambda_2\begin{pmatrix}1\\ r_2 \end{pmatrix}, \textrm{ with }\lambda_2>0,
$$
where we wish to determine, or at least estimate, $r_1$ and $r_2$. We can find $r_1$ by considering the first of the two equations
determining $\lambda_1$:
$$
r_1=\frac{1}{g(s^0-w)}\left(\lambda_1+d\left(\frac{w_1g'(s^0-w_1)}{g(s^0-w_1)}+1 \right) \right).
$$
We claim that $r_1<0$. Indeed, since the trace of $J_1$ (which equals $\lambda_1+\lambda_2$) is less than the top-left entry of $J_1$, it follows that the expression in the large parentheses is less than $-\lambda_2$, which is negative. This implies that near $(w_1,h(w_1))$, the stable manifold has a branch in the NW, and another branch in the SE region. Backward integration of solutions starting near the saddle and on $W_s$ in these regions, shows that they must either exit these regions along the boundary of $\Omega_{red}$ (because backward-time solutions cannot exit via the trapping regions), or they must converge to a steady state. However, there are no steady states to the NW of the saddle in the NW region, nor to the
SE of the saddle in the SE region. Thus, the stable manifold $W_s$ must intersect the boundary of $\Omega_{red}$ in two points, one on the line $x_1=qw$, and the other on the line $w=s^0$. Next, we focus on the location of the unstable manifold $W_u$ of the saddle, whose location is determined by $r_2$. We claim that:
$$
h'(w_1)<r_2<q.
$$
To see this we consider the first equation in the eigenvalue equation for $\lambda_2$, by solving for $r_2$, once again recalling the
expression of the derivative of the function $h$:
$$
r_2=\frac{\lambda_2}{Qg(s^0-w_1)}+h'(w_1),
$$
from which the first inequality follows because $\lambda_2>0$. The second equation in the eigenvalue equation yields that:
$$
r_2=q\frac{h(w_1)f'(p^*)}{h(w_1)f'(p^*)+\lambda_2},
$$
and again, since $\lambda_2>0$, we find that the second inequality holds, as claimed.

We can now fully assemble the phase portrait presented in Figure $\ref{phase}$. The stable manifold $W_s$ of the saddle has a branch in the
first and second trapping regions. Solutions starting on these branches must converge to $(0,0)$, and $(w_2,h(w_2))$ respectively.
In fact, it is not hard to see that all solutions in the first trapping region converge to $(0,0)$, whereas solutions in the second and third trapping region converge to $(w_2,h(w_2))$. The fate of solutions starting in the NW and SE regions depends on their initial location relative to
the separatrix $W_s$: They converge to $(0,0)$ if they start  below $W_s$, but to $(w_2,h(w_2))$ if they start above $W_s$, and this occurs because they must enter one of the trapping regions first.
\end{enumerate}
\end{proof}

The asymptotic behavior of the solutions of system $(\ref{lt1})-(\ref{lt2})$ described in Lemma $\ref{planar}$ can be translated into the asymptotic behavior of the solutions of system $(\ref{l1})-(\ref{l2})$, and combining this
with the theory of asymptotically autonomous systems, see Appendix F in \cite{chemostat}, the asymptotic behavior of the transformed system $(\ref{t1})-(\ref{t4})$ can be obtained as well. In turn, this determines the behavior of the scaled system $(\ref{cs1})-(\ref{cs4})$, from which Theorem $\ref{bistab-unscaled}$ follows immediately.  

\subsection*{More general digestion networks}
To see why the tragedy continues to hold for more general digestion network, we consider 
solutions of system $(\ref{bio1})-(\ref{bio6})$, for which it is easily verified that the variable:
$$
m=s+p+e+2c+x_1+x_2,
$$
still satisfies equation $(\ref{mass})$, implying that the family of compact sets $\Omega_\epsilon$, defined earlier, is forward
invariant for system $(\ref{bio1})-(\ref{bio6})$, for all $\epsilon\geq 0$, when {\bf H2} holds. Consequently, the proof of Theorem $\ref{scaled-tragedy}$ remains valid for
the above chemostat model $(\ref{bio1})-(\ref{bio6})$. Indeed, the first proof only crucially depends on the dynamics of $x_1$ and $x_2$ to show that $x_1(t)$ converges to zero, after which the convergence of $e$, $p$ and $x_2$ is obtained by elementary comparison arguments.
For the digestion network presented here, the dynamics of $x_1$ and $x_2$ remain unchanged, hence we can still conclude that $x_1(t)$ converges to zero. After that, it follows from a comparison argument that $e+c$ converges to zero, and then similarly that $p$ and $x_2$ converge to zero as well. One could also easily adapt the steps of the second proof to obtain the same conclusion.

\newpage

\end{document}